\documentclass[12pt]{amsart}
\usepackage{amssymb}
\usepackage{color}
\pagestyle{plain}

\newtheorem{claim}{}[section]
\newtheorem{theorem}[claim]{Theorem}
\newtheorem{lemma}[claim]{Lemma}
\newtheorem{proposition}[claim]{Proposition}
\newtheorem{corollary}[claim]{Corollary}
\renewenvironment{proof}{\noindent{\it Proof. \hskip0pt}}
                      {$\square$\par\medskip}

\textwidth 15.1 true cm \textheight 21.9 true cm \hoffset = -1.4
true cm \voffset = -0.8 true cm \setlength{\unitlength}{1.0 cm}

\begin{document}
\baselineskip 6.2 truemm
\parindent 1.5 true pc

\newcommand\lan{\langle}
\newcommand\ran{\rangle}
\newcommand\tr{{\text{\rm Tr}}\,}
\newcommand\ot{\otimes}
\newcommand\wt{\widetilde}
\newcommand\join{\vee}
\newcommand\meet{\wedge}
\renewcommand\ker{{\text{\rm Ker}}\,}
\newcommand\im{{\text{\rm Im}}\,}
\newcommand\mc{\mathcal}
\newcommand\transpose{{\text{\rm t}}}
\newcommand\FP{{\mathcal F}({\mathcal P}_n)}
\newcommand\ol{\overline}
\newcommand\JF{{\mathcal J}_{\mathcal F}}
\newcommand\FPtwo{{\mathcal F}({\mathcal P}_2)}
\newcommand\hada{\circledcirc}
\newcommand\id{{\text{\rm id}}}
\newcommand\tp{{\text{\rm tp}}}
\newcommand\pr{\prime}
\newcommand\e{\epsilon}
\newcommand\inte{{\text{\rm int}}\,}
\newcommand\ttt{{\text{\rm t}}}
\newcommand\spa{{\text{\rm span}}\,}
\newcommand\conv{{\text{\rm conv}}\,}
\newcommand\rank{\ {\text{\rm rank of}}\ }
\newcommand\vvv{\mathbb V_{m\meet n}\cap\mathbb V^{m\meet n}}
\newcommand\ppp{\mathbb P_{m\meet n} + \mathbb P^{m\meet n}}
\newcommand\re{{\text{\rm Re}}\,}
\newcommand\la{\lambda}
\newcommand\msp{\hskip 2pt}
\newcommand\ppt{\mathbb T}
\newcommand\rk{{\text{\rm rank}}\,}

\title{Exposed faces for decomposable positive linear maps arising from completely positive maps}
\author{Hyun-Suk Choi}
\address{Department of Mathematics\\Seoul National University\\Seoul 151-742, Korea}
\email{spgrass2@snu.ac.kr}

\author{Seung-Hyeok Kye}
\address{Department of Mathematics and Institute of Mathematics\\Seoul National University\\Seoul 151-742, Korea}
\email{kye@snu.ac.kr}
\thanks{This work was partially supported by PARC}

\subjclass{46L05, 81P15, 15A30}

\keywords{decomposable positive maps, exposed faces, product vectors}

\begin{abstract}
Let $D$ be a space of $2\times n$ matrices. Then the face of the cone of all completely positive maps from $M_2$
into $M_n$ given by $D$ is an exposed face of the bigger cone of all decomposable positive linear maps if and only if the set of all
rank one matrices in $D$ forms a subspace of $D$ together with zero and $D^\perp$ is spanned by rank one
matrices.
\end{abstract}

\maketitle

\section{Introduction}

Let $M_n$ be the $C^*$-algebra of all $n\times n$ matrices over
the complex field, and $\mathbb P_1[M_m,M_n]$ the cone of all
positive linear maps from $M_m$ into $M_n$. The cone $\mathbb P_1$
is very important in the recent development of entanglement theory
in quantum physics, and plays crucial role to distinguish
entangled states from separable ones. Nevertheless, the whole
convex structures of the cone $\mathbb P_1$ is extremely
complicated, and far from being completely understood. In the case
of $m=n=2$, all extreme points of the convex set of all unital
positive linear maps were found by St\o rmer \cite{stormer}, and
the whole facial structures of the cone $\mathbb P_1$ has been
characterized in \cite{byeon-kye}. See also \cite{kye-2by2_II}. On
the other hand, Yopp and Hill \cite{yopp} showed that the
elementary maps
$$
\phi_V: X\mapsto V^*XV,
\qquad
\phi^V: X\mapsto V^*X^\ttt V,
$$
where $V$ is an $m\times n$ matrix, generate extremal rays in the
cone $\mathbb P_1$, and they are exposed if the rank of $V$ is one
or full. Recently, Marciniak \cite{marcin_exp} showed that those
maps generate exposed extremal rays, in general.

The cone $\mathbb P_1$ has the subcone, denoted by $\mathbb D$,
consisting of all decomposable positive linear maps which are, by
definition, the sums of completely positive linear maps and
completely copositive linear maps. In the case of $m=2$, it was shown by
Woronowicz \cite{woronowicz} that $\mathbb P_1$
coincides with $\mathbb D$ if and only if $n\le 3$.
Note that the map $\phi_V$ generates an
extremal ray of the cone $\mathbb C\mathbb P$ of all completely
positive linear maps, and every map which generates an extremal ray
of the cone $\mathbb C\mathbb P$ is in this form. Therefore, the
above mentioned results tell us that every extremal ray of the cone
$\mathbb C\mathbb P$ is an exposed extremal ray of the much bigger cone
$\mathbb P_1$.

Recall that every completely positive linear map from $M_m$ into $M_n$ is of the form
$$
\phi_{\mathcal V}=\sum_{V\in\mathcal V}\phi_V,
$$
for a subset $\mathcal V$ of $m\times n$ matrices. We also recall
\cite{kye-cambridge} that every face of the cone $\mathbb C\mathbb P$ is of the form
$$
\Phi_D
=\{\phi_{\mathcal V}: \mathcal V\subset D\},
$$
for a subspace $D$ of the inner product space $M_{m\times n}$ of
all $m\times n$ matrices. The result of Marciniak
\cite{marcin_exp} says that $\Phi_D$ is an exposed face of
$\mathbb P_1$ whenever $\dim D=1$. If $V$ is of rank one, then
$\phi_V$ is both completely positive and completely copositive. On
the other hand, if rank of $V$ is greater than one, then $\phi_V$
is not completely copositive. We note that $V^\perp$ is spanned by rank one
matrices in both cases.

Every face of the cone $\mathbb D$ is determined by a pair $(D,E)$ of subspaces of $M_{m\times n}$, and it is
exposed by separable states if and only if there are rank one matrices $\xi_\iota \eta_\iota^*$ in $M_{m\times n}$
such that
\begin{equation}\label{range}
D^\perp=\spa \{\xi_\iota \eta_\iota^*\},\qquad
E^\perp=\spa \{\bar \xi_\iota \eta_\iota^*\},
\end{equation}
where $\bar \xi$ denotes the vector whose entries are conjugate of
the corresponding entries of the vector $\xi$. See the next section
for the details. This condition also arises in the context of the
range criterion for separability \cite{p-horo} and characterization
of some faces of the cone generated by separable states
\cite{choi_kye}. We say that a pair $(D^\perp, E^\perp)$ {\sl
satisfies the range criterion} if there are rank one matrices
$\xi_\iota \eta_\iota^*$ satisfying the condition (\ref{range}).

In general, it is very difficult to determine if a given pair of
subspaces satisfies the range criterion. One of the important step in \cite {marcin_exp} is to prove that
$(D^\perp, \{0\}^\perp)$ satisfies the range criterion if $D$ is the one dimensional subspace generated by
a matrix whose rank is at least two. In the case of $m=2$, Augusiak, Tura and Lewenstein \cite{aug} recently showed that
$(D^\perp, \{0\}^\perp)$ satisfies the range criterion whenever $D$ is completely entangled, that is, there are no rank one
matrices in $D$.

Note that this is not the case for $m=3$, since there exists a
$4$-dimensional completely entangled subspace of $M_{3\times 3}$
whose orthogonal complement has only six rank one matrices up to
constant multiples. See  \cite{ha+kye}. This is the {\sl generic} case for
$4$-dimensional subspaces of $M_{3\times 3}$ by \cite{walgate}. See
also \cite{bdmsst} for $4$-dimensional completely entangled
subspaces of $M_{3\times 3}$ whose orthogonal complements are spanned by
orthogonal rank one matrices. We refer to recent papers \cite{chen} and \cite{sko} for detailed studies
of $4$-dimensional completely entangled subspaces of $M_{3\times 3}$.

We also note that there exists a
completely entangled subspace of $M_{4\times 4}$ whose orthogonal
complement is not spanned by rank one matrices. See
\cite{bhat}. More recently, completely entangled subspaces of $M_{3\times 4}$
whose orthogonal complements are also completely entangled have been produced by
numerical searches in \cite{lein}.

In this note, we restrict our attention to the case of $m=2$, and we look for conditions of a subspace $D$ of $M_{2\times n}$
for which the face $\Phi_D$ of $\mathbb C\mathbb P$ becomes an exposed face of the bigger cone $\mathbb D$.
First of all, we show that if this is the case, then the set of all matrices in $D$ whose ranks are one or zero forms
a subspace of $D$. We say that a subspace of matrices is {\sl completely separable} if it consists of rank one or zero matrices.
Note that the range space of a state in the block matrices is completely separable then it is always separable,
here we identify $M_{m\times n}$ and $\mathbb C^n\otimes \mathbb C^m$.
We show that $\Phi_D$ is an exposed face of $\mathbb D$ if and only if the following two conditions are satisfied:
\begin{enumerate}
\item[(i)]
The set of all matrices in $D$ whose ranks are one or zero forms a subspace.
\item[(ii)]
$D^\perp$ is spanned by rank one matrices.
\end{enumerate}
As a byproduct, we see that when $E$ is completely separable, a pair $(D^\perp,E^\perp)$ of subspaces of $M_{2\times n}$
satisfies the range criterion if and only if every rank one matrix in $D$ is a partial conjugate of a rank one matrix in $E$
and $D^\perp$ is spanned by rank one matrices. This extends the above mentioned result in \cite{aug}.

Note that the vector space
$M_{m\times n}$ is inner product space isomorphic to $\mathbb C^n\otimes \mathbb C^m$ by the correspondence
$$
[z_{ij}]\mapsto \sum_{i=1}^m\left(\sum_{j=1}^nz_{ij}e_j\right)\otimes e_i,
$$
where $\{e_i\}$ and $\{e_j\}$ are the standard orthonormal bases of $\mathbb C^m$ and $\mathbb C^n$, respectively.
Then the rank one matrix $\xi\eta^*$ in $M_{m\times n}$ corresponds to $\bar\eta\otimes \xi\in \mathbb C^n\otimes \mathbb C^m$:
$$
\xi\eta^* \leftrightarrow \bar\eta\otimes \xi,
$$
where the latter is called a product vector in quantum physics. We usually use the tensor
notations in this note, with few exceptions.

In the next section, we show that if $\Phi_D$ is a face of
$\mathbb D$ then the set of all product vectors in $D$ forms a
subspace, and investigate the structures of such subspaces.
Especially, dimensions of such spaces are less than or equal to
$n$. We also find conditions for which $D^\perp$ is spanned by
product vectors in this situation. In the Section 3, we show the
main theorem mentioned above, and close the note with remarks on
examples of spaces $D$ for which $\Phi_D$ become unexposed faces
of $\mathbb D$.

The authors are grateful to the authors of \cite{lein} for valuable discussion on that paper, and Professor
Young-Hoon Kiem for valuable discussion on product vectors. The most part of this work was done
when the second author was visiting Jeju University. He express his gratitude for the hospitality of the faculties
of the university.

\section{completely separable and completely entangled subspaces}

Every face of the cone $\mathbb D$ is determined \cite{kye_decom} by a pair $(D,E)$ of subspaces of $M_{m\times n}$,
and it is of the form
$$
\sigma(D,E)=\conv \{\Phi_D,\Phi^E\},
$$
where $\Phi^E=\{\phi^{\mathcal W}:\mathcal W\subset E\}$ and
$\phi^{\mathcal W}=\sum_{W\in\mathcal W}\phi^W$. This pair is uniquely
determined if we impose the condition
$$
\sigma(D,E)\cap \mathbb C\mathbb P=\Phi_D,\qquad
\sigma(D,E)\cap \mathbb C\mathbb C\mathbb P=\Phi^E,
$$
where $\mathbb C\mathbb C\mathbb P$ denotes the cone of all completely copositive linear maps.

\begin{proposition}\label{face}
Let $D$ be a subspace of $M_{m\times n}$. If $\Phi_D$ is a face of $\mathbb D$, then
the set $D_1$ of all product vectors in $D$ forms a
subspace of $D$.
\end{proposition}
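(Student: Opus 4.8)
The plan is to feed the hypothesis into the classification of faces of $\mathbb D$ recalled just above the statement. Since $\Phi_D\subset\mathbb C\mathbb P\subset\mathbb D$ and $\Phi_D$ is a face of $\mathbb D$, it must be one of the faces $\sigma(D',E')$; intersecting with $\mathbb C\mathbb P$ gives $\Phi_{D'}=\sigma(D',E')\cap\mathbb C\mathbb P=\Phi_D$, so $D'=D$, and the copositive datum is pinned down canonically by $\Phi^E=\Phi_D\cap\mathbb C\mathbb C\mathbb P$. The first thing I would extract from this is that $E$ is completely separable. Indeed $\Phi^E\subset\sigma(D,E)=\Phi_D\subset\mathbb C\mathbb P$, so each generator $\phi^W$ with $W\in E$ is at once completely positive and completely copositive. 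Reading this through Choi matrices, $\phi^W$ has as its Choi matrix the partial transpose of the rank one positive matrix $|u_W\ran\lan u_W|$, and its complete positivity says exactly that $|u_W\ran\lan u_W|$ has positive partial transpose, forcing $u_W$ to be a product vector, i.e. $\rk W=1$.

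Next I would prove an elementary common factor lemma: a subspace of $M_{m\times n}$ whose nonzero elements all have rank one is contained either in $\{x_0\eta^*:\eta\in\mathbb C^n\}$ for some fixed $x_0$, or in $\{\xi y_0^*:\xi\in\mathbb C^m\}$ for some fixed $y_0$. The proof is short: if two members fail to share a left factor, rank one-ness of their sum forces them to share a right factor $y_0$, and then testing an arbitrary member against these two shows it too has right factor $y_0$. Applying this to the completely separable subspace $E$ produces a common factor for $E$.

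The third step transports this structure from $E$ to $D_1$ through the identity $\phi_V=\phi^{\bar\xi\eta^*}$, valid for every rank one $V=\xi\eta^*$. For such a $V\in D$ the map $\phi_V$ is completely positive (as $V\in D$) and completely copositive (as $\rk V=1$), hence $\phi_V\in\Phi_D\cap\mathbb C\mathbb C\mathbb P=\Phi^E$; since $\phi^W\in\Phi^E$ is equivalent to $W\in E$ for rank one $W$ (again immediate from the Choi-matrix description of $\Phi^E$), I conclude $\bar\xi\eta^*\in E$ for every product vector $\xi\eta^*\in D$. Now the common factor of $E$ propagates: if $E$ has a common left factor $x_0$ then $\bar\xi\parallel x_0$ for all these matrices, so every product vector of $D$ has the single first factor $\xi\parallel\bar x_0$; if $E$ has a common right factor $y_0$ then $\eta\parallel y_0$ throughout. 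Either way all product vectors of $D$ lie in one slice $\{\bar x_0\eta^*:\eta\}$ or $\{\xi y_0^*:\xi\}$, which is a subspace, and intersecting it with $D$ exhibits $D_1\cup\{0\}$ as a subspace.

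The step I expect to be delicate is fixing the copositive datum $E$ as exactly $\Phi_D\cap\mathbb C\mathbb C\mathbb P$ and proving it completely separable, compounded by the fact that the correspondence $\xi\eta^*\mapsto\bar\xi\eta^*$ linking $D_1$ with $E$ is only conjugate-linear in one factor and is defined a priori on rank one matrices alone; one therefore cannot directly pull back the linear structure of $E$ to $D_1$. The common factor lemma is precisely what circumvents this, since it collapses the whole question to matching a single fixed factor, on which the conjugation is harmless.
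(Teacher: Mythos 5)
Your proposal is correct and follows essentially the same route as the paper: choose $E$ with $\Phi_D=\sigma(D,E)$, deduce from $\Phi_D\cap\mathbb{CCP}=\Phi^E$ that $E$ is completely separable, and transfer this back to $D_1$ via the identity $\phi_{\xi\eta^*}=\phi^{\bar\xi\eta^*}$, which gives $xy^*\in D_1\iff \bar xy^*\in E$. The only difference is that the paper simply asserts at this point that $D_1=\{xy^*:\bar xy^*\in E\}$ is a subspace (implicitly relying on the structure of completely separable subspaces as $\beta\otimes A$ or $B\otimes\alpha$, stated as ``easy to see'' only in the following paragraph), whereas your common factor lemma supplies precisely the justification for that step---the point you rightly flag as delicate, since the partial conjugation $\xi\eta^*\mapsto\bar\xi\eta^*$ is not linear and so cannot transport the linear structure of $E$ to $D_1$ directly.
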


\begin{proof}
Take a subspace
$E$ of $M_{m\times n}$ such that $\Phi_D=\sigma(D,E)$. Then we have
\begin{equation}\label{gjjk}
\Phi_D\cap \mathbb C\mathbb C\mathbb P=\Phi^E.
\end{equation}
Note that $\phi^V$ is completely positive if and only if $V$ is of rank one or zero.
Therefore, the above condition (\ref{gjjk}) tells us that $E$ is completely separable.
If $xy^*\in D_1$, then $\phi^{\bar x y^*}=\phi_{xy^*}$ belongs to $\Phi^E$,
and so $\bar xy^*\in E$. Conversely, if $\bar xy^*\in E$ then
$xy^*\in D_1$. Therefore, we see that
$$
D_1=\{xy^*: \bar xy^*\in E\}
$$
is a subspace of $D$, which is completely separable.
\end{proof}

Geometrically, this means that if $\Phi_D$ is a face of $\mathbb D$,
then $\Phi_D \cap \mathbb{CCP}$ is a face of both $\Phi_D$ and
$\mathbb {CCP}$. If $D$ and $D_1$ are as in Proposition \ref{face},
then the orthogonal complement of $D_1$ in $D$ is completely
entangled. Therefore we see that if $\Phi_D$ is a face of $\mathbb
D$, then $D$ is the direct sum of a completely separable subspace
and a completely entangled subspace. We note that the converse is
not true. To see this, let $D_1$ and $D_2$ be one dimensional
subspaces of $M_{2\times 2}$ spanned by $e_{11}$ and
$e_{12}+e_{21}+e_{22}$, respectively, where $\{e_{ij}\}$ denotes the standard matrix units. Then $D_1$ is completely
separable and $D_2$ is completely entangled. But $D_1\oplus D_2$
does not satisfy the condition in Proposition \ref{face}.

From now on, we pay attention to the case of $m=2$ and suppose that
$D$ is a subspace of $M_{2\times n}$ such that $\Phi_D$ is a face of
$\mathbb D$, so the set $D_1$ consisting of all product vectors in $D$ forms a completely separable subspace of $D$.

It is easy to see that every completely separable subspace of $\mathbb C^n\otimes \mathbb C^m=M_{m\times n}$ is of the form
$$
\beta\otimes A:=\{\beta\otimes\alpha: \alpha\in A\}
$$
for a vector $\beta\in\mathbb C^n$ and a subspace $A$ of $\mathbb C^m$, or of the form
$$
B\otimes\alpha:=\{\beta\otimes\alpha:\beta\in B\}
$$
for a vector $\alpha\in\mathbb C^m$ and a subspace $B$ of $\mathbb C^n$. In the case of $m=2$, it suffices to consider the following
cases;
$$
\beta\otimes\mathbb C^2,\qquad B\otimes \alpha,
$$
for $\alpha\in\mathbb C^2$, $\beta\in\mathbb C^n$ and a subspace $B$ of $\mathbb C^n$.
Since the case of $\beta\otimes \mathbb C^2$ is easy, we will concentrate on the case of $B\otimes\alpha$.
From now on, we fix a nonzero vector $\alpha^\perp\in\mathbb C^2$ which is orthogonal to the vector $\alpha$.
Note that every element
$a$ in the orthogonal complement of $B\otimes\alpha$ is uniquely expressed by
\begin{equation}\label{expression}
a=\gamma\otimes \alpha +(\beta +\delta)\otimes \alpha^\perp,
\end{equation}
where $\beta\in B$ and $\gamma, \delta\in B^\perp$.

\begin{proposition}\label{ro2n}
Let $D$ be a subspace of $M_{2\times n}$. Suppose that the set $D_1$
of all product vectors in $D$ forms a subspace of the
form $B\otimes \alpha$. We denote by $p$ the projection of $M_n$
onto $B$, and $D_2$ the orthogonal complement of $D_1$ in $D$. Then we have the following:
\begin{enumerate}
\item[(i)]
$\dim D\le n$.
\item[(ii)]
Every product vector in $D_2(1-p)$ is of the form $\gamma
\otimes \alpha$ for $\gamma\in B^\perp$.
\item[(iii)]
The maximum number of linearly independent product vectors in $D_2(1-p)$ is $\min (k, n-k)$, where $k=\dim D_1=\dim B$.
\end{enumerate}
\end{proposition}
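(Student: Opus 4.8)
The plan is to reduce everything to the normal form (\ref{expression}). Since $D_2\subseteq D_1^\perp$, each $a\in D_2$ is written uniquely as $a=\gamma_a\otimes\alpha+(\beta_a+\delta_a)\otimes\alpha^\perp$ with $\gamma_a,\delta_a\in B^\perp$ and $\beta_a\in B$, and by (\ref{expression}) the operation $a\mapsto a(1-p)$ deletes the $B$-part $\beta_a$ of the $\alpha^\perp$-slot, leaving $a(1-p)=\gamma_a\otimes\alpha+\delta_a\otimes\alpha^\perp$. The one hypothesis I use, in sharp form, is that the only product vectors of $D$ lie in $D_1=B\otimes\alpha$. For (i) I first check that $a\mapsto\gamma_a$ is an injective linear map $D_2\to B^\perp$: if $\gamma_a=0$ then $a=(\beta_a+\delta_a)\otimes\alpha^\perp$ is a product vector of $D$ whose $\mathbb C^2$-factor $\alpha^\perp$ is off $\mathbb C\alpha$, so it must be $0$, forcing $\beta_a+\delta_a=0$ and $a=0$. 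Hence $\dim D_2\le\dim B^\perp=n-k$, and since $D=D_1\oplus D_2$ we obtain $\dim D=k+\dim D_2\le n$.

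For (ii), suppose $a(1-p)=\gamma_a\otimes\alpha+\delta_a\otimes\alpha^\perp$ is a product vector; by the product-vector criterion in $B^\perp\otimes\mathbb C^2$ this says that $\gamma_a$ and $\delta_a$ are linearly dependent, and I want to force $\delta_a=0$. If $\delta_a\ne0$, write $\gamma_a=t\delta_a$ and set $b=t\beta_a\in B$; substituting gives $a+b\otimes\alpha=(\beta_a+\delta_a)\otimes(t\alpha+\alpha^\perp)$. This is a product vector of $D$ (because $b\otimes\alpha\in D_1\subseteq D$) whose $\mathbb C^n$-factor $\beta_a+\delta_a$ is nonzero and whose $\mathbb C^2$-factor $t\alpha+\alpha^\perp$ is off $\mathbb C\alpha$, contradicting the hypothesis. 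Thus $\delta_a=0$ and $a(1-p)=\gamma_a\otimes\alpha$, as required.

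For (iii), by (ii) every product vector of $D_2(1-p)$ has the form $\gamma\otimes\alpha$, and $\gamma\otimes\alpha\in D_2(1-p)$ exactly when some $a\in D_2$ has $\gamma_a=\gamma$ and $\delta_a=0$, i.e. $a\in K:=\{a\in D_2:\delta_a=0\}$. As $a\mapsto\gamma_a$ is injective by (i), the product vectors of $D_2(1-p)$ span a space of dimension $\dim K$, so the maximum number of linearly independent ones equals $\dim K$. I bound $\dim K$ by two injections out of $K$: the map $a\mapsto\gamma_a$ into $B^\perp$ gives $\dim K\le n-k$, while $a\mapsto\beta_a$ into $B$ is injective because $\beta_a=0$ (with $\delta_a=0$) makes $a=\gamma_a\otimes\alpha$ a product vector of $D$, forcing $\gamma_a\in B\cap B^\perp=\{0\}$, whence $\dim K\le k$. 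Therefore $\dim K\le\min(k,n-k)$ for every admissible $D$. To see this is sharp, assuming $k\le n-k$ I take a basis $\beta_1,\dots,\beta_k$ of $B$ and linearly independent $\gamma_1,\dots,\gamma_k\in B^\perp$ and set $D_2=\spa\{\gamma_i\otimes\alpha+\beta_i\otimes\alpha^\perp\}$; linear independence of the $\gamma_i$ shows that $D=(B\otimes\alpha)\oplus D_2$ has no product vector off $D_1$, while $D_2(1-p)=\spa\{\gamma_i\otimes\alpha\}$ provides $k$ independent product vectors (the case $k>n-k$ is symmetric, interchanging the roles of $B$ and $B^\perp$).

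The dimension counts rest only on the injectivity of these coordinate maps and are routine. The delicate point, and where I expect the real work, is the device used in (ii) and (iii): adding a carefully chosen $b\otimes\alpha\in D_1$ to an element of $D_2$ to turn a would-be product vector into an honest product vector of $D$, so that the hypothesis can be invoked. Pinning down $b=t\beta_a$ and treating the degenerate subcases uniformly is the main thing to get right; the explicit space in the sharpness part of (iii) is the only step needing a construction rather than a count.
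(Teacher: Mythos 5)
Your proof is correct and follows essentially the same route as the paper: the same normal form $a=\gamma\otimes\alpha+(\beta+\delta)\otimes\alpha^\perp$, the same key trick in (ii) of adding an element of $D_1$ to produce the product vector $(\beta_a+\delta_a)\otimes(t\alpha+\alpha^\perp)$ off $D_1$, and the same two dimension counts in (i) and (iii), merely phrased as injectivity of $a\mapsto\gamma_a$ and $a\mapsto\beta_a$ rather than via linear independence of $\{\gamma_i\}$ and $\{\beta_i\}$. The only genuine addition is your construction showing the bound $\min(k,n-k)$ in (iii) is attained, which the paper omits (it proves only the upper bound).
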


\begin{proof}
(i).
Let $\{a_i:i=1,2,\dots,\ell\}$ be a linearly independent set of $D_2$, and write
$$
a_i=\gamma_i\otimes \alpha +(\beta_i +\delta_i)\otimes \alpha^\perp
$$
as in (\ref{expression}). Assume that $\ell>n-k=\dim B^\perp$ then $\{\gamma_i\}\subset B^\perp$ is linearly dependent, and so
$\sum_i c_i\gamma_i=0$ for a nonzero $(c_1,\dots,c_\ell)$. Then $\Sigma_i c_ia_i=\Sigma_i c_i(\beta_i +\delta_i)\otimes \alpha^\perp$
is a nonzero product vector in $D_2$. This contradiction shows that $\ell\le n-k$, and so we have $\dim D\le n$.

(ii). Note that every element of $D_2(1-p)$ is of the form
$$
a(1-p)=\gamma\otimes \alpha + \delta \otimes \alpha^\perp,
$$
for $a\in D_2$ as in (\ref{expression}).
Suppose that this is a product vector, and assume that $\delta \neq 0$. Then we see
$\gamma= s \delta$ for a constant $s \in \mathbb C$. Since $\beta
\otimes \alpha \in D_1$, we see that
$$
\begin{aligned}
\beta\otimes s\alpha+a
&= \beta\otimes s\alpha+ \delta\otimes s\alpha+ \beta\otimes\alpha^\perp+ \delta\otimes\alpha^\perp\\
&= (\beta+\delta)\otimes s\alpha+ (\beta+\delta)\otimes\alpha^\perp
=(\beta+\delta)\otimes(s\alpha+ \alpha^\perp)
\end{aligned}
$$
is a product vector in $D$, which does not belong to $D_1$. This contradiction shows that $\delta =0$.

(iii).
In the above argument, we have shown that if $a(1-p)\in D_2(1-p)$ is a product vector then $a$ is of the form
$a=\gamma\otimes \alpha +\beta \otimes \alpha^\perp$. Suppose that
$\{a_i(1-p) : i=1,\dots, \ell\}$ is linearly independent with
$a_i=\gamma_i \otimes \alpha +\beta_i \otimes \alpha^\perp$. Then it
is clear that $\{\gamma_i\}$ is linearly independent.
If $\{\beta_i\}$ is linearly dependent, then we see that $\sum_i c_i a_i =\sum c_i \gamma_i \otimes \alpha$ is a
product vector for a nonzero $(c_1,\dots, c_\ell)$, which is not contained in $D_1$. This contradiction shows that
$\{\beta_i\}$ is also linearly independent. Therefore, we have
$\ell \le \min (k, n-k)$
\end{proof}

Now, we will concentrate on the case when $D_2(1-p)$ has no nonzero
product vectors, and show that this happens if and only if $D^\perp$
is spanned by product vectors.

\begin{proposition}\label{decompo}
Let $D$, $D_1$, $D_2$, $p$ and $k$ be as was given in Proposition \ref {ro2n}, and
$\{a_j:j=1,2,\dots,\ell\}$ be a basis of $D_2$, where
$$
a_i=\gamma_i\otimes \alpha +(\beta_i +\delta_i)\otimes \alpha^\perp
$$
as was given in {\rm (\ref{expression})}. Then the following are equivalent:
\begin{enumerate}
\item[(i)]
$D^\perp$ is spanned by product vectors.
\item[(ii)]
$\{\delta_i:i=1,\dots,\ell\}$ is linearly independent.
\item[(iii)]
The subspace $D_2(1-p)$ in $M_{2\times n}$ is completely entangled.
\end{enumerate}
If $D$ satisfies the above conditions, then $\dim D\le n-1$.
\end{proposition}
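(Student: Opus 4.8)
The plan is to take $\{\delta_i\}$ being linearly independent, i.e. condition (ii), as the hub: I would prove (ii)$\Leftrightarrow$(iii) directly, and then (i)$\Leftrightarrow$(ii) by proving each implication separately. First I would record two basic facts. The set $\{\gamma_i\}$ is linearly independent, for a relation $\sum_i c_i\gamma_i=0$ with $(c_i)\neq 0$ would make $\sum_i c_i a_i=\bigl(\sum_i c_i(\beta_i+\delta_i)\bigr)\ot\alpha^\perp$ a nonzero product vector in $D_2$, hence a nonzero element of $D_1\cap D_2=\{0\}$; the same computation shows $\{a_i(1-p)\}$ is independent, so $\dim D_2(1-p)=\ell$. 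The equivalence (ii)$\Leftrightarrow$(iii) is then immediate: if $\{\delta_i\}$ is dependent, a relation $\sum_i c_i\delta_i=0$ makes $\sum_i c_i a_i(1-p)=\bigl(\sum_i c_i\gamma_i\bigr)\ot\alpha$ a nonzero product vector in $D_2(1-p)$; conversely, by Proposition \ref{ro2n}(ii) every product vector in $D_2(1-p)$ has the form $\gamma\ot\alpha$, so a vanishing $\alpha^\perp$-component forces $\sum_i c_i\delta_i=0$, whence $(c_i)=0$ under (ii).

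For (i)$\Rightarrow$(ii) I would argue by contraposition. Assuming $\{\delta_i\}$ dependent, pick $(m_i)\neq 0$ with $\sum_i m_i\delta_i=0$ and set $\beta_*=\sum_i m_i\beta_i$. The independence argument above shows $\beta_*\neq 0$ (otherwise $\sum_i m_i a_i$ is a nonzero product vector in $D_2$), and $\beta_*=\sum_i m_i(\beta_i+\delta_i)$ lies in $B\cap\spa\{\beta_i+\delta_i\}$. I then claim that $z=\beta_*\ot\alpha^\perp$ is orthogonal to every product vector in $D^\perp$ while $z\notin D$, so the product vectors cannot span $D^\perp$. Orthogonality is a short case check on a product vector $\xi\ot\eta\in D^\perp$: orthogonality to $D_1=B\ot\alpha$ forces either $\xi\in B^\perp$, in which case $\lan\beta_*,\xi\ran=0$ since $\beta_*\in B$, or $\eta\parallel\alpha^\perp$, in which case orthogonality to the $a_i$ gives $\xi\perp\beta_i+\delta_i$ for all $i$ and hence $\lan\beta_*,\xi\ran=0$ since $\beta_*\in\spa\{\beta_i+\delta_i\}$. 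That $z\notin D$ follows by matching $B$- and $B^\perp$-components and using the independence of $\{\gamma_i\}$.

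For (ii)$\Rightarrow$(i) I would reduce to the completely entangled core. Writing $H=B^\perp\ot\mathbb C^2$ and $\wt D_2=D_2(1-p)=\spa\{\gamma_i\ot\alpha+\delta_i\ot\alpha^\perp\}\subset H$, one checks that $D^\perp\cap H$ is exactly the orthogonal complement of $\wt D_2$ in $H$, and that under (ii) every element of $D^\perp$ splits as an element of $D^\perp\cap H$ plus one of $\{0\}\oplus(\spa\{\beta_i+\delta_i\})^\perp$; the splitting amounts to solving $\lan v',\delta_i\ran=-\lan u,\gamma_i\ran$, which is possible precisely because $\{\delta_i\}$ is independent. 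Since the second summand consists of product vectors, it remains to show that the complement of the completely entangled subspace $\wt D_2$ inside $H\cong\mathbb C^{n-k}\ot\mathbb C^2$ is spanned by product vectors. This is the heart of the matter, and I would attack it through the matrix pencil $\Gamma+t\Delta$ with columns $\gamma_i+t\delta_i$: parametrizing the directions in $\mathbb C^2$ projectively by $[s:t]$ removes the stray complex conjugates and turns ``orthogonal to all product vectors in the complement'' into the holomorphic condition $p'+tq'\in\spa\{\gamma_i+t\delta_i\}$ for every $t$. Writing $p'=\Gamma a$ and $q'=\Delta b$, this yields a solution $\psi(t)$ of $(\Gamma+t\Delta)\psi(t)=-t\Delta(a-b)$; complete entanglement makes $\Gamma+t\Delta$ injective for all $t$, so $\psi$ is a polynomial, a degree count using $\ker\Delta=\{0\}$ forces $\psi\equiv 0$, and hence $a=b$, i.e. the vector already lies in $\wt D_2$.

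Finally, for the bound $\dim D\le n-1$, under the equivalent conditions $\wt D_2$ is a completely entangled subspace of $\mathbb C^{n-k}\ot\mathbb C^2$ of dimension $\ell$; were $\ell=n-k$, the square matrices $\Gamma,\Delta$ would be invertible and an eigenvector of $\Gamma\Delta^{-1}$ would produce a product vector in $\wt D_2$, so $\ell\le n-k-1$ and $\dim D=k+\ell\le n-1$. I expect the main obstacle to be the pencil lemma of the previous paragraph, namely that the complement of a completely entangled subspace of $\mathbb C^{n-k}\ot\mathbb C^2$ is spanned by product vectors. The delicate points there are the projective bookkeeping of the $\mathbb C^2$-directions, which is what makes the spanning condition holomorphic in $t$, and the argument that the rational solution $\psi(t)$ is in fact a polynomial of degree zero; everything else is routine linear algebra built on the independence of $\{\gamma_i\}$ and $\{\delta_i\}$.
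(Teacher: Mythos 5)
Your proof is correct, but it follows a genuinely different route from the paper's, and it is considerably more self-contained. For (i)$\Rightarrow$(ii) the paper argues by dimension counting: it forms the spans $H_1$ and $H_2$ of the two kinds of product vectors in $D^\perp$ (those with $\mathbb C^2$-part $\alpha^\perp$, and those with $\mathbb C^n$-part in $B^\perp$), computes $\dim H_1=n-\ell$, $\dim H_2\le 2(n-k)-\ell$ and $\dim(H_1\cap H_2)=(n-k)-\dim\spa\{\delta_i\}$, and concludes that $\spa\{H_1,H_2\}=D^\perp$ forces $\dim\spa\{\delta_i\}\ge\ell$; your contrapositive with the explicit witness $\beta_*\ot\alpha^\perp$, orthogonal to every product vector in $D^\perp$ yet not in $D$, replaces this count by a constructive certificate and is equally valid. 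The equivalence (ii)$\Leftrightarrow$(iii) is the same in both proofs. The real divergence is in (ii)$\Rightarrow$(i): the paper does \emph{not} prove that the orthogonal complement of the completely entangled subspace $D_2(1-p)$ of $M_{2\times(n-k)}$ is spanned by product vectors --- it invokes this as Lemma \ref{aug-state}, the Augusiak--Tura--Lewenstein result \cite{aug}, and then reuses the same dimension count to get equality in its estimates --- whereas you prove exactly this statement from scratch via the pencil $\Gamma+t\Delta$, combined with an explicit splitting of $D^\perp$ into $D^\perp\cap(B^\perp\ot\mathbb C^2)$ plus product vectors $w\ot\alpha^\perp$ in place of the dimension count. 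Likewise, for the closing bound $\dim D\le n-1$ the paper cites \cite{wall}, \cite{part}, while your eigenvector of $\Gamma\Delta^{-1}$ proves the needed special case directly. What the paper's route buys is brevity: the hard content is delegated to the literature, and one inclusion-exclusion computation serves both directions. What yours buys is independence from \cite{aug}, which is the deepest external input of the entire paper.

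One step of your pencil argument must be written out carefully: from injectivity of $\Gamma+t\Delta$ for every $t$ you assert that the solution $\psi(t)$ of $(\Gamma+t\Delta)\psi(t)=-t\Delta(a-b)$ is a polynomial. It is, but this needs an argument, for instance: off a finite set $\psi$ agrees with a rational function (Cramer's rule applied to an $\ell\times\ell$ minor of $\Gamma+t\Delta$ that is generically nonsingular), and if that rational function had a pole of order $m\ge 1$ at some $t_0$, then multiplying the equation by $(t-t_0)^m$ and letting $t\to t_0$ would produce a nonzero vector in $\ker(\Gamma+t_0\Delta)$, contradicting injectivity; hence $\psi$ is a polynomial, and your degree count ($\Delta$ injective kills degree $\ge 1$, then $\Gamma$ injective kills the constant term, whence $a=b$) finishes. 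You flagged this as the delicate point; with it filled in, the proposal is complete.
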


\begin{proof}
First of all, we note that all of the following sets
$$
\{\gamma_i\},\qquad
\{\beta_i +\delta_i\},\qquad
\{ \gamma_i\otimes\alpha+ \delta_i\otimes\alpha^\perp\}
$$
are linearly independent, because $D_2$ has no product vectors.

If a product vector $\eta\otimes\xi$ belongs to $D^\perp$, then $\eta\otimes\xi\in D_1^\perp$ implies that
$\xi\perp \alpha$ or $\eta\in B^\perp$.
In the first case, the product vectors are of the forms $\eta\otimes\alpha^\perp$ with $\eta\in\mathbb C^n$. We denote by
$H_1$ the span of all those product vectors in $D^\perp$. We also denote by $H_2$ the span of all
product vectors $\eta\otimes\xi$
in $D^\perp$ with $\xi\in\mathbb C^2$ and $\eta\in B^\perp$.

Note that $\eta\otimes\alpha^\perp\in H_1$ belongs to $D^\perp$ if and only if it is orthogonal to $D_2$ if and only if
$\eta\perp \beta_i+\delta_i$ for each $i=1,2,\dots,\ell$. Since $\{\beta_i+\delta_i\}$ is linearly independent, we see that
$$
\dim H_1=n-\ell.
$$
We also note that an element in $H_2\subset M_{2\times n}(1-p)$ belongs to $D_2^\perp$ if and only if
it is orthogonal to $\gamma_i\otimes \alpha+ \delta_i\otimes\alpha^\perp$ for each $i=1,2,\dots,\ell$. Since
$\{ \gamma_i\otimes\alpha+ \delta_i\otimes\alpha^\perp\}$ is linearly independent, we see that
\begin{equation}\label{dim_2}
\dim H_2\le  2(n-k)-\ell.
\end{equation}
On the other hand, since every element in $H_1\cap H_2$ is of the form $\eta\otimes\alpha^\perp$ with $\eta\in B^\perp$,
it belongs to $D^\perp$ if and only if $\eta\perp\delta_i$ for each $i=1,2,\dots\ell$.
So, we see that
\begin{equation}\label{dim-intersection}
\dim(H_1 \cap H_2)=(n-k)- \dim\spa \{\delta_i:i=1,2,\dots,\ell\}.
\end{equation}
Therefore, we have
\begin{equation}\label{dim}
\begin{aligned}
\dim\spa\{H_1,H_2\}
&=\dim H_1+\dim H_2-\dim (H_1\cap H_2)\\
&\le 2n-k-2\ell+\dim\spa\{\delta_i\}.
\end{aligned}
\end{equation}

Now, $D^\perp$ is spanned by product vectors if and only if $D^\perp=\spa\{H_1, H_2\}$ if and only if
$$
\dim\spa\{H_1,H_2\}=\dim D^\perp=2n-k-\ell.
$$
By the inequality (\ref{dim}), this implies
$$
\dim\spa\{\delta_i\}\ge\ell,
$$
and we see that $\{\delta_i\}$ is linear independent. This shows the direction (i) $\implies$ (ii).

For the direction (ii) $\implies$ (iii),
we assume that $D_2(1-p)$ is not completely entangled, and there is an element $a=\sum_{i=1}^\ell c_ia_i$ in $D_2$
such that $a(1-p)$ is a product vector. Put
$$
\beta=\sum_{i=1}^\ell c_i\beta_i,\qquad
\gamma=\sum_{i=1}^\ell c_i\gamma_i,\qquad
\delta=\sum_{i=1}^\ell c_i\delta_i.
$$
Then, we have
$a= \gamma\otimes\alpha+ \beta\otimes\alpha^\perp+ \delta\otimes\alpha^\perp$.
By Proposition \ref{ro2n} (ii), we have $\delta=0$, and this shows that $\{\delta_i\}$ is linearly dependent.
Since every element in $D_2(1-p)$
is of the form $\gamma\otimes\alpha+\delta\otimes\alpha^\perp$, we
also have the direction (iii) $\implies$ (ii).

To complete the proof, we proceed to show (ii) $+$ (iii) $\implies$ (i). If
the subspace $D_2(1-p)$ of $M_{2\times (n-k)}$ is completely
entangled, then its orthogonal complement $H_2$ in $M_{2\times (n-k)}$
is spanned by product vectors by \cite{aug}, and so we see that the equality
holds in (\ref{dim_2}), and so in (\ref{dim}). By the condition (ii), we see that
$$
\dim\spa\{H_1,H_2\}=2n-k-\ell=\dim D^\perp.
$$
This shows that $D^\perp$ is spanned by product vectors, and completes the proof.

For the final claim, we note that $\dim D_2=\dim D_2(1-p)$ since $\{\gamma_i\}$ is linearly independent. Furthermore,
condition (iii) implies that $\dim D_2(1-p)\le (n-k)-1$ by \cite{wall}, \cite{part}, and so we have
$\dim D\le k+(n-k-1)=n-1$.
\end{proof}

In the case that $D_1$ is of the form $\beta\otimes \mathbb C^2$, we denote by $p$ the one
dimensional projection onto the vector $\beta$ then $D_2=D_2(1-p)$,
and so we see that the three conditions in Proposition \ref{decompo}
are automatically satisfied. In this case, we have $\dim D\le n$.

The final claim of Proposition \ref{decompo} on the dimensions of $D$ explains why there
are plenty of product vectors in $D^\perp$. Note that there are infinitely many product vectors for generic
$(n+1)$-dimensional subspaces of $M_{2\times n}$, whereas there are exactly $n$ product vectors for
generic $n$-dimensional subspaces of $M_{2\times n}$. See \cite{walgate}.

Note that conditions (ii) and (iii) give us convenient tests to determine if $D^\perp$ is spanned by
product vectors or not. For example, Let $D_1$ be the one dimensional subspace of $M_{2\times 3}$ spanned by
$e_{11}$. If $D_2$ is spanned by $e_{12}+e_{21}$, then $(D_1\oplus D_2)^\perp$ is not spanned by product vectors.
On the other hand, if $D_2$ is spanned by $e_{13}+e_{22}$ then $(D_1\oplus D_2)^\perp$ is spanned by
product vectors.

\section{exposed faces for decomposable maps}

Now, we turn our attention to the exposedness of the face $\Phi_D$.
The space ${\mathcal L}(M_m,M_n)$ of all linear maps from $M_m$ into $M_n$ and the tensor product
$M_n\otimes M_m$ are dual each other with respect
to the bilinear pairing
$$
\langle y\otimes x, \phi \rangle ={\rm Tr}(\phi(x)y^\ttt),
$$
for $x\in M_m, y\in M_n$ and $\phi\in{\mathcal L}(M_m,M_n)$.

The dual cone of $\mathbb D$ in ${\mathcal L}(M_m,M_n)$ with respect to this pairing will be denoted by
$\mathbb T$. It is known that $\mathbb T$ consists of all positive semi-definite
block matrices $A$ in $M_n\otimes M_m=M_m(M_n)$ whose block transposes (or partial transposes) $A^\tau$
are also positive semi-definite. If $\sigma(D,E)$ is a face of $\mathbb D$, then its dual face in $\mathbb T$
is of the form
$$
\tau(D^\perp,E^\perp)=\{A\in \mathbb T: R(A)\subset D^\perp, R(A^\tau)\subset E^\perp\},
$$
where $R(A)\subset \mathbb C^n\otimes \mathbb C^m$ denotes the range space of $A$.
Therefore, we see that the face $\sigma(D,E)$ of $\mathbb D$ is
exposed if and only if there is an interior point $A$ of
$\tau(D^\perp, E^\perp)$ such that
\begin{equation}\label{face-des}
\sigma(D,E)= \{\phi\in\mathbb D: \langle A,\phi\rangle =0\}.
\end{equation}
Note that the interior of $\tau(D^\perp,E^\perp)$ consists of all
$A\in\mathbb T$ such that $R(A)= D^\perp$ and $R(A^\tau)=
E^\perp$. Elements in the cone $\mathbb T$ is called a
PPT(positive partial transpose) state if it is normalized.

The dual cone of $\mathbb P_1$ with respect to this pairing
is
$$
\mathbb V_1:=M_n^+\otimes M_m^+,
$$
which is nothing but the convex cone generated by all separable
states in $M_n\otimes M_m$ \cite{eom-kye}. We say that a face
$\sigma(D,E)$ is {\sl exposed by separable states} if there is $A\in
M_n^+\otimes M_m^+$ satisfying (\ref{face-des}). It was shown
\cite{kye_decom} that $\sigma(D,E)$ is a face of $\mathbb D$ which
is exposed by separable states if and only if the pair
$(D^\perp,E^\perp)$ satisfies the range criterion.

If $D$ is a one dimensional subspace of $M_{m\times n}$ spanned by
$V$ which is not of rank one, then Lemma 2.3 of \cite{marcin_exp}
essentially shows that the pair $(D^\perp,\{0\}^\perp)$ satisfies the range criterion.
In the case of $m=2$, it was shown in \cite{aug} that if $D$
is a completely entangled subspace of $M_{2\times n}$, then the pair
$(D^\perp,\{0\}^\perp)$ also satisfies the range criterion, which
will be crucial in our discussion. We state this result separately.

\begin{lemma} \cite{aug}\label{aug-state}
If $D$ is a completely entangled subspace of $M_{2\times n}$, then
$(D^\perp,\{0\}^\perp)$ satisfies the range criterion.
\end{lemma}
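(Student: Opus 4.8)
The plan is to reprove this statement of \cite{aug} by exhibiting product vectors lying in $D^\perp$ whose span is all of $D^\perp$ and whose partial conjugates span all of $M_{2\times n}$; since $\{0\}^\perp=M_{2\times n}$, these two spanning statements are precisely the range criterion for $(D^\perp,\{0\}^\perp)$. The basic device is a matrix pencil. Fixing a basis $\{B_k:k=1,\dots,d\}$ of $D$, with $d=\dim D$, I would let $R_1,R_2$ be the $d\times n$ matrices whose $k$-th rows are the first and second rows of $B_k$, and set $Q(\mu)=\mu_1R_1+\mu_2R_2$ for $\mu\in\mathbb C^2$. A short computation with the pairing $\langle A,B\rangle=\tr(AB^*)$ shows that $\sum_kc_kB_k$ has rank at most one exactly when $c^\transpose Q(\nu)=0$ for some $\nu\neq0$; hence $D$ is completely entangled if and only if $\rk Q(\mu)=d$ for every $\mu\neq0$. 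This already forces $d\le n-1$, since a square pencil would have $\det Q(\mu)$ vanishing somewhere on $\mathbb P^1$, and it gives $\dim\ker Q(\mu)=n-d\ge1$ for every $\mu\neq0$. The same computation identifies the product vectors in the complement: $\xi\eta^*\in D^\perp$ if and only if $\eta\in\ker Q(\bar\xi)$.

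The engine for both spanning statements is a single regularity principle. Because $Q(\mu)$ has full row rank for every $\mu\neq0$, the assignment $c\mapsto c^\transpose Q(\mu)$ is injective there; so if a row vector $g(\mu)$ depending polynomially on $\mu$ lies in the row space of $Q(\mu)$ for all $\mu\neq0$, then the unique $c(\mu)$ with $g(\mu)=c(\mu)^\transpose Q(\mu)$ stays finite at every point of $\mathbb P^1$ and is therefore constant, so $g(\mu)=c_0^\transpose Q(\mu)$ identically.

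For the first span I would take $W\in M_{2\times n}$ orthogonal to every product vector in $D^\perp$ and prove $W\in D$. Orthogonality to $\xi\eta^*$ for all $\eta\in\ker Q(\bar\xi)$ says that the row vector $\xi^*W$ lies in the row space of $Q(\bar\xi)$; with the substitution $\mu=\bar\xi$ this reads $g(\mu):=\mu_1W_{1,\cdot}+\mu_2W_{2,\cdot}$ lies in the row space of $Q(\mu)$, where $g$ is homogeneous of degree one. The regularity principle then yields $W_{1,\cdot}=c_0^\transpose R_1$ and $W_{2,\cdot}=c_0^\transpose R_2$, that is $W=\sum_k(c_0)_kB_k\in D$. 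Hence $D$ is exactly the orthogonal complement of the product vectors in $D^\perp$, and they span $D^\perp$.

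The partial-conjugate statement is the subtle one, and I expect it to be the main obstacle, because partial conjugation sends $\xi\eta^*$ to $\bar\xi\eta^*$ and breaks the holomorphic matching used above: the orthogonality condition now pairs the antiholomorphic factor $\sigma^*$ against the holomorphic pencil $Q(\sigma)$. I would rewrite $\sigma^*W\eta=\langle\sigma,W\eta\rangle$ in $\mathbb C^2$ and choose local holomorphic sections $\eta^{(i)}(\sigma)$ of the rank-$(n-d)$ kernel family $\ker Q(\sigma)$; then $\langle\sigma,W\eta^{(i)}(\sigma)\rangle\equiv0$ with $W\eta^{(i)}(\sigma)$ holomorphic forces $W\eta^{(i)}(\sigma)\equiv0$, upon differentiating in $\bar\sigma_1$ and $\bar\sigma_2$. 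Thus $W$ annihilates $\bigcup_\sigma\ker Q(\sigma)$, and it remains to see that these kernels span $\mathbb C^n$. This is the regularity principle once more: a nonzero $v$ orthogonal to all of them would have $v^*$ in the row space of $Q(\sigma)$ for every $\sigma$, so $v^*=c_0^\transpose Q(\sigma)$ with $c_0$ constant, which is impossible since the right-hand side is homogeneous of degree one while $v^*$ is constant. Hence $W=0$, the partial conjugates span $M_{2\times n}$, and selecting a finite subfamily of product vectors in $D^\perp$ that realizes both spans at once produces the rank one matrices required by the range criterion.
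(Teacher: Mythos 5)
Your proposal cannot be matched against an argument in the paper, because the paper gives none: Lemma \ref{aug-state} is stated with a citation to \cite{aug} and imported as a black box (it is then used in Corollary \ref{lemma-aug}, in the lemma on exposedness by separable states, and in the proof of Theorem \ref{main}). So what you have actually done is supply a self-contained proof of the cited result, and in substance it is correct. Your route --- encoding $D$ as the pencil $Q(\mu)=\mu_1R_1+\mu_2R_2$, characterizing complete entanglement of $D$ as full row rank of $Q(\mu)$ for every $\mu\neq 0$, identifying the product vectors in $D^\perp$ as the vectors $\xi\eta^*$ with $\eta\in\ker Q(\bar\xi)$, and then proving the two spanning statements by a Liouville-type argument (a pole-free rational function on the projective line is constant) together with a Wirtinger-derivative trick to handle the antiholomorphic twist introduced by partial conjugation --- is a clean algebraic route; as a by-product it recovers the dimension bound $\dim D\le n-1$, which the paper quotes separately from \cite{wall}, \cite{part}. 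The separation of the two spans, with the partial-conjugate span requiring the extra analytic step, is exactly where the difficulty of the result lies, and your proof localizes it correctly.

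One statement does need tightening before this can stand as a proof: your ``regularity principle,'' as phrased for an arbitrary polynomial row vector $g(\mu)$, is false. If $g$ is homogeneous of degree $2$ (say $g(\mu)=\mu_1$ times the first row of $Q(\mu)$), the coefficient vector $c(\mu)$ is homogeneous of degree $1$, not constant. What is true, and what both of your applications use, is the statement with homogeneity bookkeeping: $c(\mu)$ is rational (by Cramer's rule on a nonvanishing $d\times d$ minor), pole-free on $\mathbb C^2\setminus\{0\}$ because $Q$ has full row rank there, and homogeneous of degree $\deg g-1$. For $\deg g=1$ (your matrix $W$) this makes $c$ a pole-free rational function on the projective line, hence constant, exactly as you claim. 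For $\deg g=0$ (your vector $v^*$) it forces $c\equiv 0$, since a nonzero pole-free rational function cannot be homogeneous of negative degree; this gives $v^*=0$ directly, which is the degree-mismatch contradiction you invoke in slightly different form. With that emendation, both spanning statements hold, and taking the union of two finite spanning subfamilies yields the rank one matrices required by the range criterion for $(D^\perp,\{0\}^\perp)$.
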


Combining with \cite{kye_decom}, we have the following:

\begin{corollary}\label{lemma-aug}
If $D$ is a completely entangled subspace of $M_{2\times n}$, then $\Phi_D$ is a face of $\mathbb D$
which is exposed by separable states.
\end{corollary}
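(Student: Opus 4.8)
The plan is to obtain the corollary by combining Lemma~\ref{aug-state} with the characterization of faces exposed by separable states from \cite{kye_decom}. The only genuinely new ingredient is an elementary identification: when $D$ is completely entangled, the copositive part of a face $\sigma(D,E)$ is forced to vanish, so that $\Phi_D$ coincides with $\sigma(D,\{0\})$.

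First I would record this identification. By definition $\sigma(D,E)=\conv\{\Phi_D,\Phi^E\}$, and with $E=\{0\}$ the set $\Phi^{\{0\}}=\{\phi^{\mathcal W}:\mathcal W\subset\{0\}\}$ reduces to the single zero map. Since $\Phi_D$ is itself a convex cone containing the zero map, this gives $\sigma(D,\{0\})=\Phi_D$. This step requires nothing beyond unwinding the definitions.

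Next I would apply Lemma~\ref{aug-state}: because $D$ is completely entangled, the pair $(D^\perp,\{0\}^\perp)$ satisfies the range criterion. Here $\{0\}^\perp$ is all of $M_{2\times n}$, so the condition (\ref{range}) asks precisely that the partial conjugates of a spanning family of rank one matrices for $D^\perp$ span the entire space, which is exactly what \cite{aug} supplies.

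Finally I would invoke \cite{kye_decom}, which says that $(D^\perp,E^\perp)$ satisfies the range criterion if and only if $\sigma(D,E)$ is a face of $\mathbb D$ exposed by separable states. Taking $E=\{0\}$ and using the identification $\sigma(D,\{0\})=\Phi_D$ then yields at once that $\Phi_D$ is a face of $\mathbb D$ exposed by separable states. I do not anticipate a real obstacle here: the substantive content is carried entirely by Lemma~\ref{aug-state} and the cited duality, and the only point demanding care is the bookkeeping that shows $\sigma(D,\{0\})=\Phi_D$ and that the range criterion is being read off for the correct pair of subspaces.
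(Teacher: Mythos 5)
Your proposal is correct and follows essentially the same route as the paper, which derives the corollary by combining Lemma~\ref{aug-state} with the characterization from \cite{kye_decom} of faces $\sigma(D,E)$ exposed by separable states via the range criterion. The only addition is your explicit verification that $\sigma(D,\{0\})=\Phi_D$, a bookkeeping step the paper leaves implicit.
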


It is apparent that $\mathbb V_1\subset\mathbb T$ in general since
$\mathbb D\subset\mathbb P_1$, and $\mathbb V_1=\mathbb T$ if and
only if $\mathbb P_1=\mathbb D$. It is of great importance in
entanglement theory to know in what circumstance elements in
$\mathbb T$ belong to $\mathbb V_1$. In the case of $m=2$,
Woronowicz \cite{woronowicz} actually showed that $\mathbb T=\mathbb
V_1$ when and only when $n\le 3$ to get the result mentioned in the
introduction.

The structures of the cone $\mathbb T$ for $m=2$ have
been also studied extensively in \cite{kraus}. Especially, it was
shown that if $A\in \mathbb T$ is supported on $\mathbb C^N \otimes
\mathbb C^2$ and has a product vector $\eta\otimes \xi$ with
$$
\eta\otimes \xi \in \ker A, \qquad \eta\otimes \xi^\perp \notin \ker A,
$$
then $A$ can be expressed by
$$
A=cq+\tilde A
$$
with a rank one projection $q$ onto a product vector in the range of $A$, a positive real number $c$
and $\tilde A\in \mathbb T$ such that
\begin{enumerate}
\item[(i)]
$\dim R(\tilde A)=\dim R(A)-1$ and $\dim R({\tilde A}^\tau)=\dim R(A^\tau)-1$,
\item[(ii)]
$\tilde A$ is supported on $\mathbb C^{N-1}\otimes \mathbb C^2$,
\item[(iii)]
$\tilde A\in\mathbb V_1$ if and only if $A\in \mathbb V_1$.
\end{enumerate}
By the proof of the above result, we also see that $\eta\otimes \xi$ and $\eta\otimes \xi^\perp$ are in the kernel of $\tilde A$.
Here, $A \in \mathbb T$ is said to be supported on $\mathbb C^N
\otimes \mathbb C^M$ if there exist an $N$-dimensional subspace $B$ of $\mathbb C^n$
and and an $M$ dimensional subspace $C$ in $\mathbb C^m$ such that
$R(A) \subset B\otimes C$ and $R(A) \subset B'\otimes C'$ implies $B\subset B'$ and $C\subset C'$

\begin{lemma}
Let $\Phi_D$ be a face of $\mathbb D$. Then $\Phi_D$ is exposed if and only if it is exposed by
separable states.
\end{lemma}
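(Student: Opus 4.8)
The plan is to show the nontrivial direction: if $\Phi_D$ is exposed by some functional $A \in \mathbb{T}$, then it is exposed by a separable state $A' \in \mathbb{V}_1$, since the reverse implication is immediate from $\mathbb{V}_1 \subset \mathbb{T}$. By the general theory of Section 3, exposedness of $\Phi_D = \sigma(D,E)$ means there is an interior point $A$ of the dual face $\tau(D^\perp, E^\perp)$, so $R(A) = D^\perp$ and $R(A^\tau) = E^\perp$; exposedness by separable states amounts to producing such an $A$ that additionally lies in $\mathbb{V}_1$. Thus the real content is to upgrade an arbitrary witness $A \in \mathbb{T}$ with the correct range conditions to one in $\mathbb{V}_1$ with the same ranges. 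The natural strategy is to reduce to the completely entangled case, where Corollary \ref{lemma-aug} (via \cite{aug}) already guarantees exposedness by separable states, and then to peel off the separable part of $D$ using the Kraus-type reduction recorded just before the statement.

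First I would invoke Proposition \ref{face} to write $D = D_1 \oplus D_2$ with $D_1$ the completely separable part (of the form $B \otimes \alpha$ or $\beta \otimes \mathbb{C}^2$ after the normalizations of Section 2) and $D_2 = D_1^\perp \cap D$ completely entangled. The dimension count of $D_1$, say $k = \dim D_1$, tells me how many product vectors sit inside $D$. The idea is to apply the decomposition $A = cq + \tilde{A}$ from \cite{kraus} repeatedly: each time $A$ has a product vector $\eta \otimes \xi \in \ker A$ with $\eta \otimes \xi^\perp \notin \ker A$, I split off a rank one projection $q$ onto a product vector in $R(A)$, lowering both $\dim R(A)$ and $\dim R(A^\tau)$ by one, and crucially preserving membership in $\mathbb{V}_1$ in the sense that $\tilde{A} \in \mathbb{V}_1 \iff A \in \mathbb{V}_1$. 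Since the product vectors in $D_1$ correspond to product vectors annihilated by $A$, I expect to perform $k$ such reductions, arriving at a $\tilde{A}$ whose range conditions match those of a completely entangled subspace, namely $D_2(1-p)$ together with its partial transpose image $E$.

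At the bottom of the reduction I would apply Lemma \ref{aug-state} and Corollary \ref{lemma-aug}: the completely entangled subspace $D_2(1-p)$ satisfies the range criterion, so the corresponding face is exposed by a separable state $\tilde{A} \in \mathbb{V}_1$. Reversing the reduction, each splitting $A = cq + \tilde{A}$ with $q$ a projection onto a product vector (hence $q \in \mathbb{V}_1$) keeps the sum in $\mathbb{V}_1$, so I recover a separable witness $A$ for the original face with the correct ranges $R(A) = D^\perp$ and $R(A^\tau) = E^\perp$. Here I must use the observation quoted in the excerpt that $\eta \otimes \xi$ and $\eta \otimes \xi^\perp$ both lie in $\ker \tilde{A}$, which controls how the range conditions evolve and ensures the added projections correctly fill out $D^\perp$ and $E^\perp$ rather than enlarging them spuriously.

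The main obstacle I anticipate is bookkeeping of the ranges through the iterated Kraus reduction: I must verify that the product vectors split off at each stage are exactly those forced by $D_1$, that the rank one projection $q$ at each step lands in the range $D^\perp$ (so that $A$ stays an interior point of the correct dual face throughout), and that the terminal $\tilde{A}$ really corresponds to the face of the completely entangled subspace to which Lemma \ref{aug-state} applies. In particular I need to match the supporting subspace $\mathbb{C}^{N-k} \otimes \mathbb{C}^2$ after $k$ reductions with the completely entangled space $D_2(1-p)$ of Proposition \ref{ro2n}, checking that the partial transpose ranges also align with $E^\perp = \{\bar{x} y^* : xy^* \in D_1\}^\perp$ produced in Proposition \ref{face}. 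Once this alignment is confirmed, the equivalence of exposedness and exposedness by separable states follows by assembling $A = \sum c_i q_i + \tilde{A}$ in $\mathbb{V}_1$.
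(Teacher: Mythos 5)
Your proposal follows essentially the same route as the paper's proof: peel off $k=\dim D_1$ product vectors with the Kraus-type splitting, land in a completely entangled situation where Corollary \ref{lemma-aug} applies, and transfer exposedness by separable states back up. The only organizational difference is that the paper runs a contradiction argument (if $A\in\mathbb T\setminus\mathbb V_1$ exposes $\Phi_D$, then property (iii) keeps $\tilde A\notin\mathbb V_1$, and the terminal face would be a completely entangled face not exposed by separable states), whereas you rebuild a separable witness directly: replace the terminal state by the separable one furnished by Corollary \ref{lemma-aug} and add back the split-off product projections $c_iq_i$, using that $R(X+Y)=R(X)+R(Y)$ for positive $X,Y$ and that $q_i^\tau$ is again a projection onto a product vector. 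In fact your reassembly makes explicit exactly the step the paper leaves implicit when it asserts that non-exposedness by separable states passes down to $\Phi_{\tilde D}$.

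Two remarks. First, a genuine point your uniform treatment misses: when $D_1=\beta\otimes\mathbb C^2$, the Kraus reduction cannot even start, because every product vector $\beta\otimes\xi\in\ker A$ has its partner $\beta\otimes\xi^\perp$ also in $\ker A$ (it lies in $D_1$), violating the hypothesis $\eta\otimes\xi^\perp\notin\ker A$. This is why the paper splits off that case at the outset; there one observes that $D_2\subset(1-p)\mathbb C^n\otimes\mathbb C^2$ is completely entangled, and Lemma \ref{aug-state} applied inside $(1-p)\mathbb C^n\otimes\mathbb C^2$ already yields the range criterion for $(D^\perp,E^\perp)$ with $E=D_1$, so no reduction is needed. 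Second, most of the bookkeeping you worry about is avoidable: you need not identify the terminal kernel with $D_2(1-p)$ at all. Complete entanglement of $\tilde D=R(\tilde A)^\perp$ follows for free from the rank count $\dim R(\tilde A^\tau)=2(n-k)$, since a product vector $\eta\otimes\xi$ in the kernel of a PPT state forces $\eta\otimes\bar\xi$ into the kernel of its partial transpose, which is impossible when $R(\tilde A^\tau)$ is the whole space; this is all the paper uses, and it suffices for your reassembly as well.
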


\begin{proof}
Let $D_1$ be the completely separable subspace of $D$ consisting of all product vector in $D$.
Then we have $\Phi_D=\sigma(D,E)$, where
\begin{equation}\label{par-con}
E=\{\eta\otimes \xi: \eta\otimes \bar\xi\in D_1\}.
\end{equation}
If $D_1=\beta \otimes \mathbb C^2$ for a vector $\beta \in \mathbb
C^n$ then $D_2$ is a completely entangled subspace of $(1-p)\mathbb
C^n \otimes \mathbb C^2$ with the projection $p \in M_n$ onto
$\beta$, so there is nothing to prove by Lemma \ref{aug-state}.

Next, we consider the case $D_1=B\otimes \alpha$ for a nonzero $\alpha \in \mathbb C^2$.
Assume that there exists an exposed face $\Phi_D=\sigma(D,E)$ which is not exposed by
separable states. Then there is $A\in\mathbb T\setminus \mathbb V_1$
which is supported on $\mathbb C^n \otimes \mathbb C^2$ and exposes
$\Phi_D$. Put $k=\dim D_1$. Note that
$$
\dim R(A)=2n-\dim D,\qquad \dim R(A^\tau)=2n-k.
$$
We can apply the above process in $k$-times, then we get $\tilde A\in \mathbb T\setminus \mathbb V_1$
such that
\begin{enumerate}
\item[(i)]
$\dim R(\tilde A)=2n-\dim D-k$ and $\dim R({\tilde A}^\tau)=2(n-k)$,
\item[(ii)]
$\tilde A$ is supported on $\mathbb C^{n-k}\otimes \mathbb C^2$.
\end{enumerate}
Let $\tilde D$ be the orthogonal complement of $R(\tilde A)$ in $\mathbb C^{n-k}\otimes \mathbb C^2$.
Then $\tilde D$ is completely entangled and $\Phi_{\tilde D}$ is a face of $\mathbb D$
which is not exposed by separable states. This is absurd by Corollary \ref{lemma-aug}.
\end{proof}

\begin{theorem}\label{main}
For a subspace $D$ of $M_{2\times n}$, the following are equivalent:
\begin{enumerate}
\item[(i)]
$\Phi_D$ is an exposed face of $\mathbb D$.
\item[(ii)]
$\Phi_D$ is a face of $\mathbb D$ which is exposed by separable states.
\item[(iii)]
The set of product vectors in $D$ forms a subspace and $D^\perp$ is spanned by product vectors.
\end{enumerate}
\end{theorem}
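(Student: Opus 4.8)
The plan is to obtain the equivalence (i) $\iff$ (ii) directly from the lemma immediately preceding the theorem, which says that a \emph{face} $\Phi_D$ of $\mathbb D$ is exposed precisely when it is exposed by separable states; since both (i) and (ii) already carry the hypothesis that $\Phi_D$ is a face, this lemma settles (i) $\iff$ (ii) with no further work, and everything reduces to proving (ii) $\iff$ (iii). Throughout I would use the description $\Phi_D=\sigma(D,E)$ together with the result of \cite{kye_decom} that $\sigma(D,E)$ is a face exposed by separable states exactly when $(D^\perp,E^\perp)$ satisfies the range criterion. The implication (ii) $\implies$ (iii) is then immediate: if $\Phi_D$ is a face, Proposition \ref{face} gives that the product vectors in $D$ form a subspace, and if it is moreover exposed by separable states, the range criterion forces $D^\perp=\spa\{\xi_\iota\eta_\iota^*\}$, that is, $D^\perp$ is spanned by product vectors.

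The real content is (iii) $\implies$ (ii). Denote by $D_1$ the subspace of product vectors, which is completely separable and hence of one of the forms $B\otimes\alpha$ or $\beta\otimes\mathbb C^2$; I would treat $B\otimes\alpha$ in detail, the case $\beta\otimes\mathbb C^2$ (and the degenerate cases $D_1=\{0\}$ and $D_2=\{0\}$) reducing directly to Lemma \ref{aug-state} and Corollary \ref{lemma-aug}. First set $E$ equal to the partial conjugate of $D_1$, namely $E=B\otimes\overline\alpha$. Since $E$ is completely separable, each generator $\phi^W$ of $\Phi^E$ coincides with the completely positive map $\phi_V$ for the corresponding $V\in D_1\subseteq D$, whence $\Phi^E\subseteq\Phi_D$ and $\sigma(D,E)=\Phi_D$. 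By Proposition \ref{decompo} the hypothesis that $D^\perp$ is spanned by product vectors is equivalent to the two facts that $\{\delta_i\}$ is linearly independent and that $D_2(1-p)$ is completely entangled, and it is these two facts that I would exploit.

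Keeping the notation of the proof of Proposition \ref{decompo}, I would produce the product vectors required by the range criterion out of $H_1$ and $H_2$ separately. The orthogonal complement $E^\perp$ splits as $\left(B^\perp\otimes\mathbb C^2\right)\oplus\left(B\otimes\overline{\alpha^\perp}\right)$. Applying Lemma \ref{aug-state} to the completely entangled subspace $D_2(1-p)\subseteq B^\perp\otimes\mathbb C^2\cong M_{2\times(n-k)}$ supplies product vectors in $H_2$ that span $H_2$ and whose partial conjugates already span the whole summand $B^\perp\otimes\mathbb C^2$. The space $H_1$ consists of the product vectors $\eta\otimes\alpha^\perp$ with $\eta\perp(\beta_i+\delta_i)$ for all $i$; their partial conjugates $\eta\otimes\overline{\alpha^\perp}$ project onto $B\otimes\overline{\alpha^\perp}$ through the projections $p\eta$, and these projections fill $B$ exactly because $B\cap\spa\{\beta_i+\delta_i\}=\{0\}$, which follows from the linear independence of $\{\delta_i\}$. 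Thus the chosen product vectors lie in $D^\perp=H_1+H_2$ and span it, while their partial conjugates span $E^\perp$; hence $(D^\perp,E^\perp)$ satisfies the range criterion and $\Phi_D=\sigma(D,E)$ is a face exposed by separable states, which is (ii).

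The step I expect to be the main obstacle is precisely this construction. Because partial conjugation is not linear, the partial conjugates of a product-vector basis of the $(2n-k-\ell)$-dimensional space $D^\perp$ must be made to span the strictly larger $(2n-k)$-dimensional space $E^\perp$; there is no formal reason for this to happen, and it works only because the completely entangled factor $D_2(1-p)$ can be fed into \cite{aug} to cover $B^\perp\otimes\mathbb C^2$ and because the linear independence of $\{\delta_i\}$ forces the $H_1$-vectors $\eta\otimes\alpha^\perp$, although constrained by $\eta\perp(\beta_i+\delta_i)$, still to have $B$-components filling $B\otimes\overline{\alpha^\perp}$. The dimension bookkeeping for $\dim H_1$, $\dim H_2$ and $\dim(H_1\cap H_2)$ already recorded in Proposition \ref{decompo} is what makes the total come out to $\dim E^\perp$ rather than something smaller.
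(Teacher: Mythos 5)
Your proposal is correct and follows essentially the same route as the paper's own proof: the equivalence of (i) and (ii) via the lemma preceding the theorem, the implication (ii) $\implies$ (iii) via Proposition \ref{face} and the range criterion, and for (iii) $\implies$ (ii) the choice $E=B\otimes\bar\alpha$, the application of Lemma \ref{aug-state} to the completely entangled subspace $D_2(1-p)$ to cover $B^\perp\otimes\mathbb C^2$, and the use of the linear independence of $\{\delta_i\}$ to obtain product vectors $\eta\otimes\alpha^\perp$ in $D^\perp$ whose partial conjugates have $B$-components filling $B\otimes\overline{\alpha^\perp}$. The only (cosmetic) difference is that where the paper explicitly solves for $\eta_j\in B^\perp$ with $(\eta_j\,|\,\delta_i)=-(\xi_j\,|\,\beta_i)$ to produce the vectors $(\xi_j+\eta_j)\otimes\alpha^\perp$, you express the same fact abstractly as $p\left(\spa\{\beta_i+\delta_i\}^\perp\right)=B$, which holds precisely because $B\cap\spa\{\beta_i+\delta_i\}=\{0\}$.
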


\begin{proof}
It remains to show the direction (iii) $\implies$ (ii). We denote by
$D_1$ the subspace of $D$ consists of all product vectors in $D$.
Then $D_1$ is completely separable subspace. We first consider the
case when $D_1$ is of the form $B\otimes \alpha $ for a fixed vector
$\alpha\in\mathbb C^2$ and a subspace $B$ of $\mathbb C^n$. We stick
to the notations in Proposition \ref{decompo} for $\{\beta_i :
i=1,\dots,l\}$ and $\{\delta_i : i=1,\dots,l\}$, for which the
conditions in Proposition \ref{decompo} are satisfied.

Take a basis $\{\xi_j:j=1,2,\dots, k\}$ of $B$. Then for each $j=1,2,\dots, k$, we can take $\eta_j\in B^\perp$
such that
$$
(\eta_j\,|\delta_i)_{B^\perp}=-(\xi_j\,|\, \beta_i)_B,\qquad i=1,2,\dots,\ell.
$$
This is possible since $\{\delta_i\}$ is linearly independent by Proposition \ref{decompo}. Then we see that
$$
(\xi_j+\eta_j\,|\,\beta_i+\delta_i)_{\mathbb C^n}=(\xi_j\,|\, \beta_i)_B+(\eta_j\,|\delta_i)_{B^\perp}=0,
$$
for each $j=1,2,\dots,k$ and $i=1,2,\dots,\ell$. Since $\{\xi_j\}$ is linearly independent, we also see that
$$
\{(\xi_j+\eta_j)\otimes \alpha^\perp: j=1,2,\dots,k\}
$$
is linearly independent set of product vectors
belongs to $D^\perp$, which does not belong to $M_{2\times n}(1-p)$.

Since $D_2(1-p)$ is completely entangled subspace of $M_{2\times n}(1-p)$,
there are $\zeta_\iota\in\mathbb C^2$ and $\omega_\iota\in B^\perp$ such that
$$
D_2(1-p)^\perp=\spa\{ \omega_\iota\otimes \zeta_\iota\},\qquad M_{2\times n}(1-p)=\spa\{\omega_\iota\otimes \bar\zeta_\iota\}
$$
by Lemma \ref{aug-state}. Therefore, if we put $E$ by (\ref{par-con}) then we have
$$
D^\perp=\spa\{(\xi_j+\eta_j)\otimes \alpha^\perp, \omega_\iota\otimes \zeta_\iota\},
\qquad E^\perp=\spa\{(\xi_j+\eta_j)\otimes \overline{ \alpha^\perp}, \omega_\iota\otimes \bar\zeta_\iota\}.
$$
Therefore, we see that the pair $(D^\perp, E^\perp)$
satisfies the range criterion, and conclude that $\Phi_D=\sigma(D, E)$ is a face of $\mathbb D$
which is exposed by separable states by \cite{kye_decom}.

The remaining case $\beta\otimes \mathbb C^2$ is obvious.
\end{proof}

\begin{corollary}
Let $(D,E)$ be a pair of subspaces in $M_{2\times n}$, where $E$ is completely separable. Then
$(D^\perp,E^\perp)$ satisfies the range criterion if and only if every product vector in $D$
is of the form $\eta\otimes \bar \xi$ for $\eta\otimes \xi\in E$ and $D^\perp$ is spanned by product vectors.
\end{corollary}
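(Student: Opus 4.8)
The plan is to translate everything through the identification $\xi\eta^*\leftrightarrow\bar\eta\otimes\xi$, under which the range criterion for $(D^\perp,E^\perp)$ demands product vectors $p_\iota\otimes q_\iota$, with $p_\iota\in\mathbb C^n$ and $q_\iota\in\mathbb C^2$, satisfying $D^\perp=\spa\{p_\iota\otimes q_\iota\}$ and $E^\perp=\spa\{p_\iota\otimes\bar q_\iota\}$; here the bar acts only on the $\mathbb C^2$--factor. I would settle the only-if part by a bare-hands orthogonality calculation and then deduce the converse directly from Theorem \ref{main}, using the completely separable hypothesis on $E$ to match it with the partner space attached to $D$ in (\ref{par-con}).

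Suppose first that such $p_\iota\otimes q_\iota$ exist. Then $D^\perp$ is spanned by product vectors, which is one of the two asserted conditions. For the other, let $\eta\otimes\zeta$ be any product vector in $D$. Orthogonality to each $p_\iota\otimes q_\iota$ makes the scalar $(\eta\,|\,p_\iota)(\zeta\,|\,q_\iota)$ vanish, so for every index either $(\eta\,|\,p_\iota)=0$ or $(\zeta\,|\,q_\iota)=0$. Since $(\bar\zeta\,|\,\bar q_\iota)=\overline{(\zeta\,|\,q_\iota)}$, the same alternative forces $(\eta\,|\,p_\iota)(\bar\zeta\,|\,\bar q_\iota)=0$ for every index, whence $\eta\otimes\bar\zeta$ is orthogonal to every $p_\iota\otimes\bar q_\iota$ and therefore lies in $E$. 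Writing $\xi=\bar\zeta$ this is exactly the statement that $\eta\otimes\zeta=\eta\otimes\bar\xi$ with $\eta\otimes\xi\in E$. Interchanging the roles of the two spanning families gives the reverse implication, that every product vector in $E$ is the partial conjugate of one in $D$; so partial conjugation is a bijection between the product vectors of $D$ and those of $E$, and in particular $E=\{\eta\otimes\xi:\eta\otimes\bar\xi\in D_1\}$, where $D_1$ denotes the set of product vectors in $D$.

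For the if-part I would run this identification in reverse. Because $E$ is completely separable its partial conjugate is again completely separable, and the correspondence above identifies $D_1$ with that space; hence $D_1$ is a subspace and $E$ is precisely the space associated to $D$ by (\ref{par-con}). The two hypotheses now read: the product vectors of $D$ form a subspace, and $D^\perp$ is spanned by product vectors. This is condition (iii) of Theorem \ref{main}, so that theorem yields condition (ii), namely that $\Phi_D=\sigma(D,E)$ is a face of $\mathbb D$ exposed by separable states; by \cite{kye_decom} this is exactly the assertion that $(D^\perp,E^\perp)$ satisfies the range criterion.

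The point that needs care is matching the given $E$ with the space manufactured by Theorem \ref{main}. That theorem produces the range criterion only for the canonical partner $\{\eta\otimes\xi:\eta\otimes\bar\xi\in D_1\}$ of $D$, so one really uses that partial conjugation pairs the product vectors of $D$ and those of $E$ bijectively, rather than the weaker one-sided inclusion that each product vector of $D$ arises from $E$; indeed the only-if computation shows that any completely separable $E$ admitting the range criterion must equal this canonical partner, a strictly larger one being ruled out by that very computation. Securing both inclusions of the pairing — exactly what the orthogonality argument delivers — is therefore the crux, and once $E$ has been pinned to (\ref{par-con}) the converse is an immediate appeal to Theorem \ref{main}.
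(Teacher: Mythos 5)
Your proof is correct, and its ``only if'' half takes a genuinely different route from the paper's. The paper obtains that direction from duality machinery: the range criterion makes $\sigma(D,E)$ a face of $\mathbb D$ exposed by separable states by \cite{kye_decom}, complete separability of $E$ gives $\sigma(D,E)=\Phi_D$, and the argument of Proposition \ref{face} is then repeated. You instead compute directly with a witnessing family $\{p_\iota\otimes q_\iota\}$, using only $(\eta\,|\,p_\iota)(\zeta\,|\,q_\iota)=0$ and $(\bar\zeta\,|\,\bar q_\iota)=\overline{(\zeta\,|\,q_\iota)}$ to pass between orthogonality to $D^\perp$ and to $E^\perp$; this is elementary and bypasses \cite{kye_decom} entirely for that direction, while the two ``if'' halves are the same appeal to Theorem \ref{main}. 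The dividend of your computation is precisely the point you flag at the end, and you are right about it: under the literal one-sided hypothesis the ``if'' direction is false. For instance, with $n=2$, $D=\spa\{e_1\otimes e_1+e_2\otimes e_2\}$ and $E=\spa\{e_1\otimes e_1\}$, the one-sided hypothesis holds vacuously and $D^\perp$ is spanned by product vectors by Lemma \ref{aug-state}, yet $(D^\perp,E^\perp)$ cannot satisfy the range criterion, since by your ``only if'' computation that would force $e_1\otimes e_1\in E$ to be a partial conjugate of a product vector in the completely entangled space $D$. The paper's own ``if'' argument has the same hidden dependence (its assertion $\Phi_D=\sigma(D,E)$ already needs the partial conjugate of $E$ to lie inside $D$, i.e., that $E$ is the canonical partner (\ref{par-con})), so your insistence on the bijective pairing is the reading under which the statement is true, and your orthogonality argument is exactly what shows a completely separable $E$ admitting the range criterion must equal that partner. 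One small tightening: in the ``if'' part, justify that $D_1$ is a subspace directly from the hypothesis --- under the two-sided reading it is the partial conjugate of the completely separable subspace $E$, and even under the one-sided reading one has $D_1=D\cap E'$ with $E'$ that partial conjugate --- rather than via ``the correspondence above,'' which was established only under the range criterion.
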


\begin{proof}
If every product vector in $D$
is of the form $\eta\otimes \bar \xi$ for $\eta\otimes \xi\in E$ and $D^\perp$ is spanned by product vectors
then $\Phi_D=\sigma(D,E)$ is a face of $\mathbb D$
which is exposed by separable states by Theorem \ref{main}. Therefore, the pair $(D^\perp,E^\perp)$
satisfies the range criterion.

Conversely, suppose that $(D^\perp,E^\perp)$ satisfies the range criterion.
Then $\sigma(D,E)$ is a face of $\mathbb D$ which is exposed by separable states. Since $E$ is
completely separable, we see that $\sigma(D,E)=\Phi_D$. Then the same argument as in
Proposition \ref{face} shows that every product vector in $D$
is of the form $\eta\otimes \bar \xi$ for $\eta\otimes \xi\in E$. It is clear that
$D^\perp$ is spanned by product vectors if $(D^\perp,E^\perp)$ satisfies the range criterion.
\end{proof}

Before finishing this note, we remark on the question when $\Phi_D$
becomes an unexposed face of $\mathbb D$. From now on, we suppose
that $\dim D\ge 2$ and the set $D_1$ of all product vectors in $D$
forms a nonzero subspace of $D$. Note that if $\dim D=1$ then
$\Phi_D$ is exposed by \cite{marcin_exp} and \cite{yopp}. We further
suppose that $D^\perp$ is not spanned by product vectors,
otherwise $\Phi_D$ becomes an exposed face by Theorem \ref{main}.

If $m=n=2$ then there is only one possibility that $\dim D=2$ and
$\dim D_1=1$ by Proposition \ref{ro2n}. If we put $E$ by
(\ref{par-con}) then $\Phi_D=\sigma(D,E)$ is nothing but the
unexposed face of $\mathbb D$ considered in Proposition 2.5 of
\cite{byeon-kye}. Note that generic $2$-dimensional subspaces of
$M_{2\times 2}$ have two product vectors, but the $2$-dimensional
space $D^\perp$ has only one product vector.

In the case of $m=2$ and $n=3$, we have three possibilities:
\begin{enumerate}
\item
$\dim D=2$ and $\dim D_1=1$.
\item
$\dim D=3$ and $\dim D_1=2$.
\item
$\dim D=3$ and $\dim D_1=1$.
\end{enumerate}
In any cases, one can show that $\Phi_D$ is a face of $\mathbb D$
by the similar direct calculations as were done in the proof of
Proposition 2.5 in \cite{byeon-kye}. So, all of these give rise to
unexposed faces of $\mathbb D$. We note that there are infinitely
many product vectors in $D^\perp$ in the first and second cases.
The product vectors in $D^\perp$ spans the
$2$-dimensional space in the second case, and $1$-dimensional space in the
third case. Both of them are not
generic cases, since generic $3$-dimensional subspaces of
$M_{2\times 3}$ have three product vectors.

We did not touch the question when $\Phi_D$ becomes a face of the
cone $\mathbb P_1$ consisting of all positive linear maps, which is
beyond the scope of this note. One necessary condition for $\Phi_D$
to be a face of $\mathbb P_1$ is that $D^\perp$ has a product vector
by \cite{kye_canad}. Numerical searches in \cite{lein} indicate that
there are $6$-dimensional completely entangled subspaces $D$ of
$M_{3\times 4}$ whose orthogonal complements are also completely
entangled. In this case, $\Phi_D$ is not contained in any maximal face
of $\mathbb P_1$, and so the interior of $\Phi_D$ lies in the interior of the whole cone
$\mathbb P_1$ even though $D$ is completely entangled.

\end{document}